\DeclareMathOperator*{\argmin}{argmin}
\newtheorem{theorem}{Theorem}
\newtheorem{lemma}{Lemma}
\theoremstyle{definition}
\DeclareMathOperator{\prox}{prox}
\newcommand{\indep}{\perp \!\!\! \perp}
\title{Asynchronous Multiuser Detection for SCMA with Unknown Delays: A Compressed Sensing Approach}
\author{
    Dylan Wheeler, \IEEEmembership{Student Member, IEEE,} Erin E. Tripp, and Balasubramaniam Natarajan, \IEEEmembership{Senior Member, IEEE}

    \thanks{This work was funded in part by AFOSR grant 21RICO035. Any opinions, findings and conclusions or recommendations expressed in this material are those of the authors and do not necessarily reflect the views of the U.S. Air Force Research Laboratory. Cleared for public release 11 August 2022: Case number AFRL-2022-3871.}%
    \thanks{D. Wheeler and B. Natarajan are with the Mike Wiegers Department of Electrical and Computer Engineering at Kansas State University (email: dylan84@ksu.edu)}
    \thanks{E.E. Tripp is with the High Performance Systems Branch of the Air Force Research Laboratory Information Directorate (email: erin.tripp.4@us.af.mil)}
}
\begin{document}

\maketitle
\begin{tikzpicture}[remember picture,overlay]
    \node[align=center] at ([yshift=1em]current page text area.north) {This work has been submitted to the IEEE for possible publication. Copyright may be\\ transferred without notice, after which this version may no longer be accessible.};
\end{tikzpicture}%
\thispagestyle{plain}

\begin{abstract}
    Despite being the subject of a growing body of research, non-orthogonal multiple access has failed to garner sufficient support to be included in modern standards. One of the more promising approaches to non-orthogonal multiple access is sparse code multiple access, which seeks to utilize non-orthogonal, sparse spreading codes to share bandwidth among users more efficiently than traditional orthogonal methods. Nearly all of the studies regarding sparse code multiple access assume synchronization at the receiver, which may not always be a practical assumption. In this work, we aim to bring this promising technology closer to a practical realization by dropping the assumption of synchronization. We therefore propose a compressed sensing-based delay estimation technique developed specifically for an uplink sparse code multiple access system. The proposed technique can be used with nearly all of the numerous decoding algorithms proposed in the existing literature, including the popular message passing approach. Furthermore, we derive a theoretical bound regarding the recovery performance of the proposed technique, and use simulations to demonstrate its viability in a practical uplink system. 
\end{abstract}

\begin{IEEEkeywords}
    SCMA, NOMA, delay estimation, beyond-5G, 6G
\end{IEEEkeywords}

\vspace{-0.4cm}

\section{Introduction}

As fifth generation (5G) wireless network technologies continue to roll out across the globe, the research community has begun to shift their efforts toward the development of the next generation of wireless network, commonly referred to as either beyond-5G or 6G. Some have set forth their visions of what this future network might look like \cite{strinati_6G}. One prominent idea involves the implementation of non-orthogonal multiple access (NOMA) techniques. NOMA is seen as a key technique to achieve higher spectral efficiency in some scenarios, e.g. massive multiple access \cite{scma6G} that can enable greater efficiency in massive internet of things (IoT) applications. Although it was first seen as a promising technique for 5G, NOMA has failed to garner sufficient support to be included in the current 5G standards \cite{3gpp_rel16}. If NOMA is to play a larger role in the wireless network of the future, further improvements must be made.

Two prominent NOMA techniques include power domain NOMA and code domain NOMA \cite{noma}. In power domain NOMA, different transmit power levels are used to distinguish among different users' data at the receiver using a technique known as successive interference cancellation (SIC). In essence, the data corresponding to the greatest power level is decoded first, with all other data treated as noise. This data is then removed from the observation, with the decoding of the subsequent data following in the same manner. In code domain NOMA, non-orthogonal codes are used to encode and spread data symbols over some orthogonal resource elements (RE's), such as orthogonal frequency tones or time slots. Since non-orthogonal codes are used, collisions of data symbols may occur on a given RE, resulting in increased difficulty when decoding at the receiver. Despite the complications introduced by these methods, both have been shown to lead to more efficient use of increasingly precious spectral resources \cite{noma}.

One particular method of code domain NOMA that has drawn much interest over the past decade is sparse code multiple access (SCMA). First introduced by Nikopour and Baligh \cite{scma_nikopour}, SCMA limits the total interference on any orthogonal RE by using sparse, non-orthogonal codes to spread the data. Considering the uplink scenario, this is accomplished by mapping each user's bits directly to sparse, complex-valued codewords from a codebook unique to each user. The sparsity pattern of a user's codebook determines which RE's will be used by that user, and dictates the total interference on any given RE. With knowledge of the individual codebooks at the receiver, simultaneous decoding of each user's data, i.e. multiuser detection (MUD), is possible through an iterative algorithm known as the message passing algorithm (MPA) \cite{scma_nikopour}. This algorithm takes in the signals received at each RE as inputs, and returns log-likelihood ratios which are then used to directly make decisions on each user's bits.

Over the past decade, numerous variants of MPA and other low complexity alternatives for SCMA receiver design have been proposed. However, these efforts typically assume synchronous reception. With an expected explosion of IoT devices in the post-5G era, the synchronous assumption may not be realistic and could limit the applicability of SCMA. Therefore, in this paper we introduce a delay estimation scheme designed for an SCMA uplink system. The proposed scheme makes use of modern compressed sensing techniques for efficient and reliable delay estimation, and can be implemented alongside the vast majority of SCMA decoders that have been proposed in the existing literature.

\subsection{Related Work}
\label{rel_work}

MPA was the first proposed method for decoding in an SCMA system \cite{scma_nikopour}. Although MPA can achieve near-optimal performance, it is still computationally prohibitive for practical implementation. Thus, many works have focused on developing SCMA decoding schemes with decreased complexity. A recent and thorough survey discussing these methods, among other aspects of SCMA, is provided by Rebhi \textit{et al.} \cite{scma_survey}. Here we briefly discuss some notable schemes that are considered in finer detail in the aforementioned survey.

As MPA involves many exponential calculations, some variations exploit the monotonic property of the logarithm function and are termed Max-Log-MPA and Log-MPA \cite{maxlogmpa}. The performance of these algorithms converges to that of MPA as SNR increases, though the complexity is still prohibitive for practical implementation. Some other MPA-based techniques include the utilization of a lookup table \cite{mpa_dai}, fairness-based calculations \cite{mpa_han}, and partial marginalization \cite{mpa_jia}, among others. Despite the various methods employing this technique, the complexity of MPA always grows exponentially with the degree of non-orthogonality, leading to impractical schemes in the massive scenario.

Another method that has been proposed to reduce the complexity of MPA is sphere decoding. In sphere decoding, the search space of the decoder is reduced to only those codewords in a given hypersphere around the received signal vector. In \cite{scma_tian}, the authors utilize Log-MPA with restricted search regions. Other works have developed detectors which take a joint approach to both SCMA and channel decoding. These decoders can achieve an additional coding gain by performing the decoding in a joint fashion. Two coding schemes have been adopted for use in 5G, namely low-density parity-check (LDPC) codes and polar codes \cite{3gpp_rel16}. In \cite{scma_lai}, the sparse factor graphs of both LDPC coding and SCMA are combined to form a single factor graph over which a simplified version of MPA is carried out. The authors of \cite{scma_mu} propose a simple combination of the MPA detector for SCMA and a soft-input soft-output successive cancellation for polar coding.  


A different approach that bypasses the exponential complexity issue of MPA includes algorithms which perform distribution approximation. The authors of \cite{scma_huang} develop simplified algorithms by approximating other users' messages as Gaussian interference. Another approximation method utilizes a Markov Chain Monte Carlo (MCMC) method to draw samples in an iterative manner which converge to the most likely combination of codewords \cite{chen}. As these methods, along with deep learning approaches, reduce decoding complexity from exponential to linear, they are strong candidates for practical implementations of SCMA.

While there have been many techniques developed for SCMA decoding, utilizing a diverse set of approaches, nearly all of them study the problem under the assumption of synchronous reception. However, synchronized reception at the receiver is not always practical, especially when there are many devices operating on the system, as in the massive machine type communication (mMTC) use case defined for 5G. Therefore, it is of interest to study SCMA detection scheme under the assumption of asynchronous reception. There is very limited prior work related to this problem. For example, the authors of \cite{async_scma} study the problem of SCMA detection in the presence of perfectly known delays. They propose a sampling scheme which makes use of delay knowledge, and then apply a combined belief propagation and message-passing algorithm, demonstrating similar performance to synchronous MPA. While this work introduces a scheme for asynchronous SCMA detection, perfect delay knowledge is likely not available in a practical setting, and the asynchronous SCMA detection problem under the condition of unknown delays appears to be entirely absent from the current literature.

Therefore, we are interested in the problem of delay estimation in an SCMA system. A related problem is that of delay estimation in a general multiple access system, such as orthogonal frequency division multiple access (OFDMA). One recent work studies the problem of delay estimation in OFDMA uplink channels specified by the DOCSIS 3.1 standard \cite{delay_est_nguyen}, in which an iterative method centering on peak detection of the inverse fast Fourier transform of a received pilot sequence. Another paper utilizes a compressed sensing technique  \cite{delay_est_senyuva} similar to the technique employed in our work. However, their method is specific to an OFDMA system, while our formulation can generally apply to SCMA systems utilizing any orthogonal REs, not just OFDM tones. In \cite{delay_est_masmoudi}, a delay estimation algorithm is proposed for general multicarrier direct sequence code-division multiple access (MC-DS-CDMA) systems, of which SCMA is a particular case, making use of the expectation maximization (EM) algorithm to jointly consider all subcarriers. While this work captures SCMA as a particular case, a delay estimation algorithm designed specifically for the SCMA system appears to also be absent from the literature.

\begin{figure*}[htbp]
    \centerline{\includegraphics[scale = 0.7]{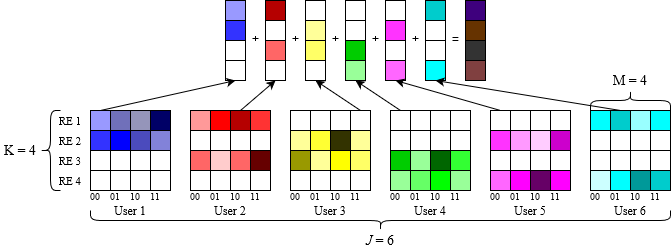}}
    \caption{Illustration of SCMA system with $J=6$ users and $K=4$ orthogonal resource elements. }
    \label{scma_fig}
\end{figure*}

\subsection{Contributions}
To address the limitations of existing approaches for asynchronous SCMA detection, this work aims to 
\begin{itemize}
    \item Derive an estimation scheme for recovering unknown delays at the receiver in an SCMA system. This scheme can be used to support any previously proposed decoders in the asynchronous scenario.
    \item Propose an efficient method, utilizing convex optimization techniques, to perform delay estimation. Demonstrate the ability of this method to successfully recover delay information through both theoretical and simulated results.
    \item Combine efficient delay estimation with previously proposed SCMA codebook design and detection techniques to demonstrate overall performance in terms of bit-error rate (BER) at the SCMA uplink receiver compared to synchronous methods.
\end{itemize}

The rest of this paper is organized as follows. Section \ref{mod_sec} presents the system model that is used throughout the paper. Details about the synchronous and asynchronous model along with the associated assumptions are provided. In section \ref{app_sec}, the proposed asynchronous SCMA delay estimation scheme is provided. Additionally, some results regarding the convergence of the estimation algorithm are summarized, and some theoretical results regarding recovery guarantees are derived. In section \ref{sim_section}, the simulation setup is described and results are given which demonstrate the performance of the proposed technique. Section \ref{con_sec} concludes the paper, and gives some ideas for future research directions.

\section{System Model}
\label{mod_sec}

In this section, we describe in detail the aspects of the model used for the SCMA system. The traditional SCMA model for the synchronous case is presented first, followed by a description of the asynchronous scenario.

\subsection{Synchronous SCMA}

As was briefly mentioned above, in uplink SCMA each user's bits are mapped directly to sparse, complex-valued codewords drawn from a codebook that is unique to each user. In particular, consider a system servicing $J$ total users with $K$ orthogonal REs, where it is assumed that $J > K$, i.e. the system is \textit{overloaded}. Each user $j = 1, \ldots, J$ has their own unique codebook $\mathcal{C}_j$ of cardinality $M$ with unique sparsity pattern, as is illustrated in Figure \ref{scma_fig}. 

Formally, the received SCMA discrete signal can be modelled as
\begin{equation}
    \textbf{y} = \sum_{j=1}^J \textbf{H}^{(j)}\textbf{x}^{(j)} + \textbf{z},
    \label{eq_sync_model}
\end{equation}
where $\textbf{y}$ denotes the received signal vector of dimension $K\times 1$, $\textbf{H}^{(j)}_{K\times K}$ denotes the $K \times K$ diagonal channel matrix corresponding the $j^{th}$ user, $\textbf{x}^{(j)}$ denotes the $K \times 1$ codeword selected by the $j^{th}$ user, and $\textbf{z} \sim \mathcal{CN}(\textbf{0}, \sigma^2 \textbf{I})$ is a $K \times 1$ vector of complex additive white Gaussian noise (AWGN). We denote by $x_k^{(j)}$ the $k^{th}$ element of $\textbf{x}^{(j)}$. 

Referring to Figure \ref{scma_fig}, each colored square represents a nonzero, complex entry of the corresponding codeword (column) that represents some quadrature amplitude modulated (QAM) symbol, while the white squares represent zero entries. Thus, in the setup depicted each codeword is 2-sparse, meaning that each user's data is spread over two RE's. Conversely, each RE is occupied by three users; e.g. RE 1 is utilized by users 1, 2, and 6.

Observe that this formulation describes the system over a single QAM symbol duration, and let this duration be denoted by $T_s$. Thus, the discrete superposition of QAM symbols (with added noise) on a given RE is recovered by sampling over the duration $T_s$. Doing this at every RE results in recovery of the vector $\textbf{y}$, which is then used as an input, along with channel state information, to the decoder to directly make decisions on each user's bits. The same basic process is followed in the case that a sequence of symbols is transmitted by each user; namely, the superimposed symbols are sampled at each RE and decoded in a sequential manner. One important assumption for the success of this method is synchronization in time of the reception of each user's transmission. Without perfect synchronization, simply sampling over $T_s$ will result in \textit{intersymbol interference}, greatly impacting the performance of the system \cite{goldsmith_2005}.

\subsection{Asynchronous SCMA Reception}

To observe the effect of imperfect synchronization, first consider the continuous-time complex baseband representation of the aforementioned discrete signals. Assume that the $N$-symbol data sequences transmitted by each user $j = 1,2,\ldots,J$ experience some delay, denoted by $\tau_j$. These delays could be due to factors such as user mobility and multipath effects, which make synchronization difficult in a massive network. For ease of discussion, we will consider here only the received signal on RE $k$. Furthermore, denote by $\Omega_k$ the set of all users sharing this RE, i.e.,
\begin{equation}
    \Omega_k = \{ j \mid x_k^{(j)} \neq 0\}.
\end{equation}
Then the signal received by RE $k$ is expressed as \cite{async_scma}
\begin{equation}
    y_k(t) = \sum_{n=1}^N \sum_{j \in \Omega_k} h_k^{(j)}x_k^{(j)}[n] r(t - nT_s - \tau_j) + z_k(t),
\end{equation}
where $x_k^{(j)}[n]$ denotes the $n^{th}$ QAM symbol in the $N$-symbol sequence transmitted by user $j$ over RE $k$, $h_k^{(j)}$ denotes the complex channel coefficient corresponding to user $j$ and RE $k$ (assumed to be fixed for the transmission interval), $r(\cdot)$ denotes a pulse-shaping function, and $z_k(t)$ is the time-domain expression for the AWGN for RE $k$. We will assume that $r(\cdot)$ is simply a rectangular pulse function. A visual demonstration of this signal on RE 1 for the system presented in Figure $\ref{scma_fig}$ is given in Figure \ref{async_scma_fig}.

\begin{figure*}[htbp]
    \centerline{\includegraphics[scale = 0.50]{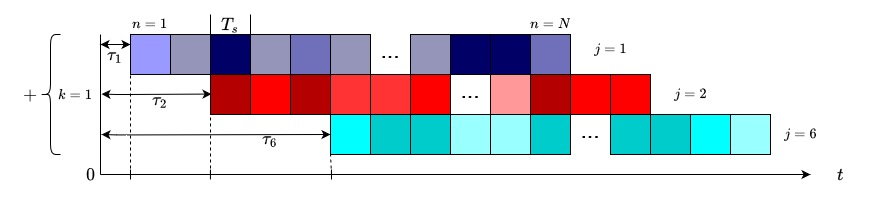}}
    \caption{Visual representation of the received signal at RE 1 in the asynchronous scenario for the system depicted in Figure \ref{scma_fig}}
    \label{async_scma_fig}
\end{figure*}

Suppose that at RE $k$, the receiver synchronizes to the first received signal, e.g. user 1's signal in Figure \ref{async_scma_fig}. If synchronization is achieved, i.e., $\tau_i = \tau_j$ for all $i,j \in \Omega_k$, then taking a sample every $T_s$ seconds will yield the following output
\begin{equation}
    y_k[n] = \sum_{j \in \Omega_k} h_k^{(j)} x_k^{(j)}[n] + z_k[n]
    \label{sync_superpos}
\end{equation}
which corresponds exactly to the model first defined in (\ref{eq_sync_model}). In the case where synchronization is not achieved, by observation of Figure \ref{async_scma_fig} it becomes clear that this approach is no longer applicable.

A simple solution arises when we assume perfect knowledge of the delays for each user is available, as is presented in \cite{async_scma}. The key idea is that, in order to apply MPA (or any other decoding algorithm), we need a discrete received signal in which each symbol can be represented as a summation of \textit{some} input symbols; not necessarily from the same discrete time instance $n$. Consider again the scenario depicted in Figure \ref{async_scma_fig}, and suppose we sample at a rate of $1/T_s$. With knowledge of the delays, we know that the first two samples contain only symbols from user 1. Then, samples 3-5 contain only symbols from users 1 and 2. Continuing in this manner, we are able to classify each sample of the received signal as a summation of some noisy, distorted input symbols. Doing this for every RE, MPA can then be applied on a sample-by-sample basis to perform MUD.

As we do not assume \textit{a priori} knowledge of the delays, we cannot directly implement the scheme described above. Therefore, we assume a sampling rate of $1/T_s$ at each RE, and a few assumptions are held for simplicity. First, we assume that each delay is an integer multiple of the sampling period, or in other words, each element of the received signal on RE $k$ is a perfect superposition of symbols from users sharing that RE, as is illustrated in Figure \ref{async_scma_fig}. Note that by decreasing the symbol period $T_s$, we can achieve any arbitrary level of sampling resolution at the cost of increased bandwidth. Second, it is assumed that all delays are bounded by some maximum delay $\tau$, i.e. $\tau_j < \tau \: \forall \: j = 1,2,\ldots,J$. This is reasonable given that some coarse synchronization method for the system is already in place, e.g. an IoT setup in which a central base station periodically requests data from some sensors. In this case, the maximum delay would correspond to the maximum delay incurred from sensing, processing, and transmission of the data at each node.

\section{Proposed Approach}
\label{app_sec}

As mentioned in the previous section, in order to apply any of the numerous SCMA decoding algorithms, we require knowledge of \textit{which} symbols we are decoding. Let's clarify this with an example. Assume we are using an MPA-based decoder and consider Figure \ref{async_scma_fig}. The decoder will simultaneously act on the signals from all $K$ RE's on a sample-by-sample basis to decode the received signal. For example, the fifth sample of the signal in Figure \ref{async_scma_fig} will be considered with the fifth sample of all other RE's to produce LLR's for the bits corresponding to that sample time. However, note that the fifth sample of the received signal corresponds to the \textit{fifth} symbol of the first user and the \textit{third} symbol of the second user on that RE. Thus, delay knowledge is required to properly associate the LLR's obtained from the MPA decoder to the correct symbols (equivalently, the correct bits) of each user. Operating under the assumption that delay knowledge is unknown \textit{a priori}, we require a method to estimate these delays. In this section we describe our proposed method for delay estimation, as well as some theoretical guarantees based on this method. 

\subsection{Delay Estimation with Compressed Sensing}
\label{main}

Our main contribution is a delay estimation technique inspired by the method employed in \cite{cdra_async}, which uses a convex optimization approach for active user detection in an asynchronous code-division random access setup. Assuming a multiple access scenario instead, we will see that a similar approach can yield efficient and robust delay estimation. First of all, let the vector of transmitted symbols from user $j$ on RE $k$ be structured as
\begin{equation*}
    \textbf{x}_k^{(j)} = \begin{pmatrix}
        \textbf{s}^{(j)}_{1\times S} & \textbf{d}^{(j)}_{1\times N}
    \end{pmatrix}^T,
\end{equation*}
where $\textbf{s}^{(j)}$ is an $S\times 1$ vector of pilot symbols, $\textbf{d}^{(j)}$ is the $N \times 1$ vector of data symbols. Thus, the entire transmitted frame will consist of a total of $S+N$ discrete symbols. Furthermore, define the delayed symbol frame transmitted by user $j$ over RE $k$ as
\begin{equation*}
    \tilde{\textbf{x}}_k^{(j)} = \begin{pmatrix}
        \textbf{0}_{1 \times d_j} & \textbf{x}_k^{(j)} & \textbf{0}_{1 \times D - d_j}
    \end{pmatrix}^T,
\end{equation*}
where $d_j$ is defined as the \textit{discrete} delay of user $j$'s signal, and $D$ is the maximum discrete delay. Working under the assumption that each delay $\tau_j$ is an integer multiple of the sampling period $T_s$, we have that $D = \tau/T_s$, and $\tau_j = d_j T_s$, with $d_j \in \{0, 1, \ldots, D\}$. We see that the dimension of $\tilde{\textbf{x}}_k^{(j)}$ is $(S+N+D) \times 1$. Recall that the set of users sharing RE $k$ is denoted by $\Omega_k$, and define the cardinality of this set as $\eta_k$. Let the $i^{th}$ element of $\Omega_k$ be $j_{ki}$. Then we can define the discrete received signal on RE $k$ in terms of the $\tilde{\textbf{x}}_k^{(j)}$'s as
\begin{equation}
    \textbf{w}_k = \begin{bmatrix}
        \tilde{\textbf{x}}_k^{(j_{k1})} & \tilde{\textbf{x}}_k^{(j_{k2})} & \cdots & \tilde{\textbf{x}}_k^{j_{k\eta_k}}
    \end{bmatrix} \textbf{h}_k + \textbf{z}_k,
    \label{delay_model}
\end{equation}
where
\begin{equation*}
    \textbf{h}_k = \begin{pmatrix}
        h_k^{(j_{k1})} & h_k^{(j_{k2})} & \cdots & h_k^{(j_{k\eta_k})}
    \end{pmatrix}^T.
\end{equation*}
The linear model (\ref{delay_model}) captures the behavior of the asynchronous SCMA system on RE $k$ over a transmission frame as a noisy sum of delayed signals. Our goal is to use this model to estimate the delays $d_j$ for the users sharing RE $k$. However, in (\ref{delay_model}), the unknown delays are represented by the structure of the symbol matrix, namely by the number of zeros that precede the first nonzero symbol in each column. Thus, in its current form, the linear model is composed of an \textit{unknown} matrix multiplied by a \textit{known} vector. We will first reformulate this problem into a form of a \textit{known} matrix multiplied by an \textit{unknown} vector. To achieve this, let the length of each pilot sequence be greater than $D$, and suppose that the last $D$ symbols of each pilot sequence take a value of 0. Then we can observe that the first $S$ symbols of each $\tilde{\textbf{x}}_k^{(j)}$ will contain only nonzero pilot symbols and zeros. Furthermore, this vector will contain the entire nonzero pilot sequence, due to the assumption on the maximum delay. Denote this nonzero sequence by $\textbf{s}'^{(j)}$. Working with the known sequence  $\textbf{s}'^{(j)}$, we can form the matrix
\begin{equation}
    \textbf{T}^{(j)} = \begin{bmatrix}
        \textbf{s}'^{(j)} & 0 & \cdots & 0\\
        0 & \textbf{s}'^{(j)} & & \vdots  \\
        \vdots &  & \ddots & 0 \\
        0 & \cdots & 0 & \textbf{s}'^{(j)}\\
    \end{bmatrix}.
\end{equation}

The first column is simply $\textbf{s}^{(j)}$. In the second column, $\textbf{s}'^{(j)}$ is ``shifted'' down one; i.e., a zero is added above and removed below $\textbf{s}'^{(j)}$. This continues until there are $D$ zeros above $\textbf{s}'^{(j)}$ and none below, and this is the last column. The dimensions of $\textbf{T}^{(j)}$ are then $S \times (D+1)$.

Note that each column represents a different delay possibility for user $j$. Furthermore, under the assumptions on the delays stated above, this matrix lists \textit{all} possible discrete delays for this user. Repeating this process for the other users sharing RE $k$, we can define the block matrix
\begin{equation}
    \label{eq_Tk}
    \textbf{T}_k = \begin{bmatrix}
        \textbf{T}^{(j_{k1})} & \textbf{T}^{(j_{k2})} & \cdots & \textbf{T}^{(j_{k\eta_k})}
    \end{bmatrix},
\end{equation}
which has dimensions $S \times \eta_k(D + 1)$. Using this matrix, we obtain the linear model
\begin{equation}
    \textbf{w}_k' = \textbf{T}_k \textbf{q}_k + \textbf{z}_k'
    \label{eq_recov_model}
\end{equation}
where $\textbf{w}_k'$ is simply the first $S$ elements of $\textbf{w}_k$ in (\ref{delay_model}), and likewise with $\textbf{z}_k'$. In the absence of delay information, $\textbf{q}_k$ can be thought of as an unknown \textit{selection} vector. As was previously noted, each column of $\textbf{T}_k$ represents a unique discrete delay possibility for one of the $\eta_k$ users on that RE. As $\textbf{w}_k'$ is only a noisy sum of each user's delayed pilot sequence $\textbf{s}^{(j_{ki})}$, and assuming user $j_{ki}$'s pilot sequence is transmitted with power $\mathcal{E}^{(j_{ki})}$, the elements of $\textbf{q}_k$ will take a value of $\sqrt{\mathcal{E}^{(j_{ki})}}h_k^{(j_{ki})}$ in the rows corresponding to the true delays of the users (in other words, the ``correct'' columns of $\textbf{T}_k$) and zeros elsewhere. With knowledge of the pilot sequences, traditional linear estimation techniques can be used to recover $\textbf{q}_k$, and in turn estimate the delays $d_j$.

However, we can see that the dimensions of $\textbf{T}_k$ are $S\times \eta_k(D + 1)$. With the application of power-constrained devices in mind, we desire a short $\textbf{s}'^{(j)}$, and thus we assume that $S < \eta_k(D+1)$. Under this assumption, the system of linear equations given by (\ref{eq_recov_model}) is \textit{underdetermined}.  While this would normally be cause for concern, we can also see that of the $\eta_k(D + 1)$ elements in $\textbf{q}_k$, only $\eta_k$ will be nonzero. Thus, $\textbf{q}_k$ is potentially a highly sparse vector for which compressed sensing techniques can be employed for reliable recovery \cite{cs_donoho}. High-probability recovery is possible in this scenario by solving the LASSO optimization problem \cite{lasso}:
\begin{equation}
    \hat{\textbf{q}}_k = \argmin_{\textbf{q}_k \in \mathbb{C}^{\eta_k(D+1)}} \left\{ \frac{1}{2} \Vert \textbf{T}_k \textbf{q}_k - \textbf{w}_k \Vert^2_2 + \lambda \sigma \Vert \textbf{q}_k \Vert_1 \right\}
    \label{lasso}
\end{equation}
where $\lambda$ is a regularization parameter. The LASSO problem entails the optimization of a sum of convex functions, which is itself a convex function, and thus it can be efficiently solved with well-developed algorithms \cite{boyd}.

Once $\hat{\textbf{q}}_k$ is obtained by solving $(\ref{lasso})$, the delay information must be extracted. This is easily accomplished by locating the maximum-magnitude element in $\hat{\textbf{q}}_k$ corresponding to each user. For example, say that $D = 10$, and thus the first 11 elements of $\hat{\textbf{q}}_k$ correspond to the 11 columns of $\textbf{T}_k^{j_{k1}}$, i.e. the delay possibilities of the first user on that RE. Then the element with the largest magnitude of these 11 is decided as corresponding to the correct delay. Note that, as each user is associated with more than one RE, performing this estimation for each RE will inherently lead to multiple delay estimates for the same user. This is addressed by averaging the elements of $\hat{\textbf{q}}_k$ corresponding to a given user across all RE's, and then using this averaged vector to determine the delay. This also illustrates the natural benefit of diversity that can be exploited due to the structure of the SCMA system.

Once the delay estimates have been obtained, each sample can be associated to the correct symbol for each user, and MPA (or any other decoding algorithm) can be used to jointly decode the bits of each user. The full delay-estimation and decoding algorithm is given below.

\begin{algorithm}
    \caption{SCMA Uplink Delay Estimation and Decoding}
    \begin{algorithmic}[1]
    \label{main_alg}
    \REQUIRE $\{\textbf{w}_1,\ldots,\textbf{w}_K\}$, $\sigma^2$, $\{\textbf{h}_1,\ldots,\textbf{h}_K\}, \{\textbf{s}'^{(1)}, \ldots, \textbf{s}'^{(J)}\}$
    \FOR {$k$ in $K$}
        \STATE $\textbf{w}_k' = \textbf{w}_k[1:S]$
        \STATE obtain $\textbf{T}_k$ using (\ref{eq_Tk})
        \STATE $\hat{\textbf{q}}_k = \argmin_{\textbf{q}_k \in \mathbb{C}^{\eta_k(D+1)}} \left\{ \frac{1}{2} \Vert \textbf{T}_k \textbf{q}_k - \textbf{w}'_k \Vert^2_2 + \lambda \sigma \Vert \textbf{q}_k \Vert_1 \right\}$
    \ENDFOR
    \STATE use $\hat{\textbf{q}}_k$'s to estimate $\{d_1, \ldots, d_J\}$
    \STATE implement decoder using $\{d_j\}_{j=1}^J$, $\{\textbf{w}_k\}_{k=1}^K$ $\{\textbf{h}_k\}_{k=1}^K$, and $\sigma^2$
    \end{algorithmic}
\end{algorithm}

\subsection{Delay Estimation Theoretical Results}

To solve the LASSO problem given by (\ref{lasso}), we use the constant-step forward-backward algorithm, presented as Algorithm 3.4 in \cite{combettes}:
\begin{algorithm}
    \caption{Constant-Step Forward-Backward Algorithm}
    \begin{algorithmic}[1]
    \label{fb_alg}
    \STATE \textit{Initialization:} Fix $\varepsilon \in (0, 3/4)$ and $\textbf{x}_0 \in \mathbb{R}^{C\eta_k}$
    \WHILE {$\Vert \textbf{x}_n - \textbf{x}_{n-1} \Vert_2 > \epsilon$}
        \STATE $\textbf{y}_n = \textbf{x}_n - \beta^{-1}\nabla \frac{1}{2} \Vert \textbf{T}_k \textbf{x}_n - \textbf{w}_k \Vert^2_2$
        \STATE $\mu_n \in [\varepsilon, 3/2-\varepsilon]$
        \STATE $\textbf{x}_{n+1} = \textbf{x}_n + \mu_n(\prox_{\beta^{-1}\lambda \sigma \Vert \cdot \Vert_1} \textbf{y}_n - \textbf{x}_n)$
    \ENDWHILE
    \RETURN $\hat{\textbf{q}}_k = \textbf{x}_n$
    \end{algorithmic}
\end{algorithm}

Here, $\mu_n$ is a step size parameter, $\beta = \Vert \textbf{T}_k \Vert_2^2$ is the largest eigenvalue of $\textbf{T}_k$, and $\prox_{ \alpha f} \textbf{x}$ is the proximity operator of the function $f(\cdot)$ evaluated at $\textbf{x}$ with scalar parameter $\alpha > 0$. Algorithm \ref{fb_alg} belongs to a class of proximal splitting algorithms, and has been proven to converge to the solution of (\ref{lasso}) \cite{combettes}. 

We will now work to obtain some bounds on the optimality of the solution $\hat{\textbf{q}}_k$ for the specific problem at hand. For the remainder of this section, we drop the subscript $k$ as the process is identical for each RE. First, recall from (\ref{eq_Tk}) that $\textbf{T}$ is formed by concatenating $\eta$ Toeplitz matrices, and therefore is itself a Toelplitz matrix. Following the results in Haupt \textit{et al} \cite{Haupt:ToeplitzCS:10}, we show that this matrix satisfies the restricted isometry property (RIP) with high probability by showing that the eigenvalues of the Gram matrix $\textbf{T}^T \textbf{T}$ are close to 1. For a true isometry, these eigenvalues are equal to 1. As is well-known in the field of compressed sensing,   satisfaction of the RIP leads to optimal recovery using the LASSO technique \cite{wainwright_sparsity}.

The sketch of the proof is as follows. We show that the diagonal elements of the Gram matrix $\textbf{G} = \textbf{T}^T \textbf{T}$ are close to 1 using the Laurent-Massart bounds and that the off-diagonal elements are close to 0 with high probability using the Chernoff bound. Applying Gershogorin's Theorem, this provides a statement about the eigenvalues of $\textbf{G}$ and therefore the RIP.

The diagonal elements of $\textbf{G}$ are inner products of a user's pilot sequence $\textbf{s}^{(j)}$ with itself, while the off-diagonal elements are inner products of shifted (i.e., misaligned) versions of these sequences or inner products of different users' (possibly misaligned) pilot sequences. Essentially, each $\textbf{s}^{(j)}$ should be most strongly correlated with itself, and there should be weaker relationships between $\textbf{s}^{(j)}$ and its delay-shifted copies or with $\textbf{s}^{(i)}$ for some other user $i \in \Omega, i \neq j$. With this in mind, throughout the proof we assume that pilot symbols are generated as iid complex normal random variables. We first present the following lemmas which will be used in our main theorem:

\begin{lemma}[Gershgorin's Circle Theorem]
The eigenvalues of an $n\times n$ matrix M all lie in the union of discs $d_i(c_i, r_i)$ with center $c_i = M_{ii}$ and radius $\sum_{j=1, j\neq i}^n |M_{ij}|$.
\label{lem_gershgorin}
\end{lemma}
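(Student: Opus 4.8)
The plan is to argue directly from the eigenvalue equation, exploiting the largest-magnitude entry of an associated eigenvector. Let $\lambda$ be an arbitrary eigenvalue of $M$ with eigenvector $\textbf{v} = (v_1, \ldots, v_n)^T \neq \textbf{0}$, so that $M\textbf{v} = \lambda\textbf{v}$. First I would choose an index $i$ for which $|v_i| = \max_{1 \le \ell \le n} |v_\ell|$; because $\textbf{v}$ is nonzero, this maximum is strictly positive, i.e. $|v_i| > 0$. This single choice is what makes the rest of the argument go through.

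Next I would write out the $i$-th component of the identity $M\textbf{v} = \lambda\textbf{v}$, namely $\sum_{j=1}^n M_{ij} v_j = \lambda v_i$, and isolate the diagonal term to get $(\lambda - M_{ii})\,v_i = \sum_{j \neq i} M_{ij} v_j$. Taking magnitudes, then applying the triangle inequality followed by the bound $|v_j| \le |v_i|$ (valid for every $j$ by the choice of $i$), yields $|\lambda - M_{ii}|\,|v_i| \le \sum_{j \neq i} |M_{ij}|\,|v_j| \le |v_i| \sum_{j \neq i} |M_{ij}|$. Dividing through by $|v_i| > 0$ gives $|\lambda - M_{ii}| \le \sum_{j \neq i} |M_{ij}| = r_i$, which is precisely the statement that $\lambda$ lies in the disc $d_i(c_i, r_i)$ with center $c_i = M_{ii}$ and radius $r_i$. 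Since $\lambda$ lies in at least this one disc, it lies in the union of all $n$ discs, and as $\lambda$ was an arbitrary eigenvalue the claim follows.

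The step I would be most careful about — the one that truly carries the proof — is the selection of $i$ as the index of a maximum-magnitude component. It serves a double purpose: it guarantees $|v_i| \neq 0$, which legitimizes dividing by $|v_i|$, and it simultaneously licenses replacing each $|v_j|$ by $|v_i|$ in the triangle-inequality estimate. Everything else is routine manipulation, and notably no invertibility, genericity, or diagonalizability hypothesis on $M$ is required. The result therefore applies without modification to the complex Toeplitz Gram matrix $\textbf{G} = \textbf{T}^T \textbf{T}$ to which the main theorem will appeal.
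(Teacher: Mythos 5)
Your proof is correct, and it is the standard argument for Gershgorin's Circle Theorem: pick the maximum-magnitude component of an eigenvector, isolate the diagonal term in that row of $M\textbf{v} = \lambda\textbf{v}$, and apply the triangle inequality before dividing by $|v_i| > 0$. Note that the paper itself states this lemma without any proof, treating it as a classical result imported for use in Theorem~\ref{thm_main}, so there is no paper proof to compare against; your write-up simply supplies the omitted standard derivation, and it is valid as written for arbitrary complex square matrices, which is what the application to the Gram matrix of $\textbf{T}$ requires.
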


\begin{lemma}[Lemma 9, \cite{Haupt:ToeplitzCS:10}]
Let $x_i$ and $y_i$, $i = 1, \dots, P$, be sequences of i.i.d. zero-mean Gaussian random variables with variance $\sigma^2.$
Then 
\[ \Pr\left(\left|\sum_{i=1}^P x_iy_i \right| \geq t\right) \leq  2\exp\left(-\frac{t^2}{4\sigma^2(P\sigma^2+t/2)}\right)\]
\label{haupt}
\end{lemma}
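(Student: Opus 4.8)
The plan is to read this as a Bernstein-type concentration inequality for a sum of independent \emph{sub-exponential} random variables and to prove it by the Chernoff method. Writing $X_i = x_i y_i$, the summands are i.i.d.\ with $\mathbb{E}[X_i] = \mathbb{E}[x_i]\mathbb{E}[y_i] = 0$, and a product of two Gaussians is heavier-tailed than a Gaussian, so no purely sub-Gaussian tail can hold --- which is exactly why the denominator carries the extra additive $t/2$. First I would symmetrize: since the joint law of $(x_i,y_i)$ is invariant under $x_i \mapsto -x_i$, the sum $\sum_i X_i$ is symmetric about the origin, so
\[
    \Pr\left(\left|\sum_{i=1}^P X_i\right| \geq t\right) \leq 2\,\Pr\left(\sum_{i=1}^P X_i \geq t\right),
\]
which accounts for the leading factor of $2$; it then suffices to bound the one-sided tail.

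For the one-sided tail I would apply the Chernoff bound $\Pr(\sum_i X_i \geq t) \leq e^{-\theta t}\prod_{i=1}^P \mathbb{E}[e^{\theta X_i}]$ for $\theta > 0$, so the crux is the moment generating function of a single product. Conditioning on $x_i$ and using the Gaussian MGF over $y_i$ gives $\mathbb{E}[e^{\theta x_i y_i}\mid x_i] = e^{\theta^2 \sigma^2 x_i^2/2}$, and then integrating the resulting $\chi^2$-type exponential over $x_i \sim \mathcal{N}(0,\sigma^2)$ yields the closed form
\[
    \mathbb{E}[e^{\theta x_i y_i}] = \left(1 - \theta^2 \sigma^4\right)^{-1/2}, \qquad |\theta| < \sigma^{-2}.
\]
The finite radius of convergence $|\theta| < \sigma^{-2}$ is the sub-exponential signature and is what ultimately caps how large a deviation the Gaussian part of the tail can control.

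Next I would convert this exact MGF into a Bernstein form. Writing $\log\mathbb{E}[e^{\theta X_i}] = -\tfrac12\log(1-\theta^2\sigma^4)$ and expanding, one establishes a bound $\log\mathbb{E}[e^{\theta X_i}] \leq \frac{\nu^2\theta^2/2}{1 - b\theta}$ valid for $0 < \theta < 1/b$, where the relevant parameters turn out to be the variance proxy $\nu^2 = 2\sigma^4$ and the scale $b = \sigma^2$. Summing over the $P$ independent terms and applying Chernoff gives $\log\Pr(\sum_i X_i \geq t) \leq -\theta t + \frac{P\nu^2\theta^2/2}{1-b\theta}$, which I would minimize over $\theta \in (0, 1/b)$. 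The feasible minimizer is $\theta^\star = t/(P\nu^2 + bt)$, and substituting it collapses the exponent exactly to $-t^2/\big(2(P\nu^2 + bt)\big) = -t^2/\big(4\sigma^2(P\sigma^2 + t/2)\big)$, reproducing the stated denominator while simultaneously capturing both the Gaussian-tail regime (small $t$) and the exponential-tail regime (large $t$).

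I expect the main obstacle to be establishing the \emph{precise} Bernstein parameters $\nu^2 = 2\sigma^4$ and $b = \sigma^2$ from the closed-form MGF --- that is, bounding $-\tfrac12\log(1-\theta^2\sigma^4)$ by $\frac{\nu^2\theta^2/2}{1-b\theta}$ with exactly these constants on the full domain $(0,\sigma^{-2})$ --- since obtaining the correct \emph{order} is easy but landing on the exact denominator requires careful constant bookkeeping (and a check, which is immediate here, that $\theta^\star$ stays in $(0,1/b)$). Everything else --- the symmetrization, the iterated-expectation MGF computation, and the Chernoff step --- is routine once the sub-exponential nature of $x_i y_i$ is recognized. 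Since this is Lemma~9 of \cite{Haupt:ToeplitzCS:10}, an equally valid route is simply to invoke a packaged Bernstein inequality for sub-exponential variables with the parameters above, but the direct derivation makes the specific constants transparent.
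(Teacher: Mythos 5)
Your proof is correct, but there is nothing in the paper to compare it against: Lemma \ref{haupt} is imported verbatim as Lemma 9 of \cite{Haupt:ToeplitzCS:10}, and the paper gives no proof of it (it is used as a black box inside the proofs of Lemma \ref{lem_dif_prod} and Theorem \ref{thm_main}). Judged on its own merits, your Chernoff/Bernstein derivation is sound, and the one step you flag as the main obstacle does go through with exactly your constants. Writing $u = \theta\sigma^2 \in (0,1)$, the required bound on the log-MGF reads
\[
    -\tfrac{1}{2}\log\left(1-u^2\right) \;\le\; \frac{u^2}{1-u},
\]
and this follows by comparing power series: the left side is $\sum_{k\ge 1} u^{2k}/(2k)$, the right side is $\sum_{m\ge 2} u^m$, and every coefficient on the left ($1/(2k)$ on even powers, $0$ on odd powers) is dominated by the corresponding coefficient $1$ on the right. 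From there the rest is as you say: $\theta^\star = t/(P\nu^2 + bt)$ lies in $(0,1/b)$, substitution collapses the exponent to $-t^2/\bigl(2(P\nu^2+bt)\bigr)$, and with $\nu^2 = 2\sigma^4$, $b=\sigma^2$ this equals $-t^2/\bigl(4\sigma^2(P\sigma^2+t/2)\bigr)$, while the symmetry of $\sum_i x_iy_i$ supplies the leading factor of $2$. Two small remarks: your argument implicitly uses that all $2P$ variables are jointly independent (needed both for the MGF factorization $\mathbb{E}[e^{\theta x_iy_i}\mid x_i]$ and for the product over $i$), which is the intended reading of the lemma and is precisely why the paper's Theorem \ref{thm_main} splits sums via permutations before invoking it; and your derivation is essentially the same MGF-plus-Chernoff route used in the cited reference, so it serves as a valid self-contained replacement for the citation.
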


\begin{lemma}
    Let $z_i$, $i=1, \dots, P$ be i.i.d. zero mean complex normal variables, i.e. $z_i = x_i +j y_i$ such that $x_i \indep y_i$, $x_i, y_i \sim \mathcal{N}(0, \sigma^2)$. Then
    \[ \Pr\left(\left|\sum_i^{P} |z_i|^2 - 2P\sigma^2\right| \geq 4\sigma^2\sqrt{2Pt} \right) \leq 2\exp(-t).\]
    \label{lem_same_prod}
\end{lemma}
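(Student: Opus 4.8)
The plan is to recognize the sum $\sum_{i=1}^P |z_i|^2$ as a scaled chi-squared random variable and then apply the Laurent--Massart concentration inequality. Writing $z_i = x_i + j y_i$ with $x_i \indep y_i$ and each distributed as $\mathcal{N}(0,\sigma^2)$, we have $|z_i|^2 = x_i^2 + y_i^2$, so that
\[
\frac{1}{\sigma^2}\sum_{i=1}^P |z_i|^2 = \sum_{i=1}^P \Bigl(\tfrac{x_i}{\sigma}\Bigr)^2 + \sum_{i=1}^P \Bigl(\tfrac{y_i}{\sigma}\Bigr)^2
\]
is a sum of $2P$ squared independent standard normal variables, i.e.\ a chi-squared random variable $W \sim \chi^2_{2P}$ with $2P$ degrees of freedom. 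In particular $\mathbb{E}[W] = 2P$, so that $\mathbb{E}\bigl[\sum_i |z_i|^2\bigr] = 2P\sigma^2$, which is exactly the centering appearing in the statement.

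First I would invoke the two-sided Laurent--Massart bound: for $W \sim \chi^2_D$ and any $t > 0$,
\[
\Pr\bigl(W - D \geq 2\sqrt{Dt} + 2t\bigr) \leq e^{-t}, \qquad \Pr\bigl(W - D \leq -2\sqrt{Dt}\bigr) \leq e^{-t}.
\]
Setting $D = 2P$ and multiplying the deviations through by $\sigma^2$ recasts these as tail bounds on $\sum_i |z_i|^2 - 2P\sigma^2$, with upper threshold $\sigma^2\bigl(2\sqrt{2Pt} + 2t\bigr)$ and lower threshold $-2\sigma^2\sqrt{2Pt}$, each corresponding event carrying probability at most $e^{-t}$.

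The remaining work is to replace these asymmetric thresholds by the single symmetric value $4\sigma^2\sqrt{2Pt}$ and then combine the tails. The lower tail is immediate: since $4\sigma^2\sqrt{2Pt} \geq 2\sigma^2\sqrt{2Pt}$, the event $\{\sum_i |z_i|^2 - 2P\sigma^2 \leq -4\sigma^2\sqrt{2Pt}\}$ is contained in the Laurent--Massart lower-tail event and hence has probability at most $e^{-t}$. For the upper tail I would dominate the linear-in-$t$ term by the square-root term: the inequality $2\sqrt{2Pt} + 2t \leq 4\sqrt{2Pt}$ is equivalent to $t \leq \sqrt{2Pt}$, i.e.\ to $t \leq 2P$, and in that regime $\sigma^2(2\sqrt{2Pt}+2t) \leq 4\sigma^2\sqrt{2Pt}$, so the upper-tail event likewise has probability at most $e^{-t}$. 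A union bound over the two tails then produces the claimed factor $2\exp(-t)$.

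The step I expect to be the main obstacle is precisely this domination of $2t$ by $2\sqrt{2Pt}$ in the upper tail, which is not unconditional but holds only for $t \leq 2P$. I would therefore either state the hypothesis $t \leq 2P$ explicitly in the lemma, or carry the extra $+2t$ term forward into the main theorem; this is harmless in practice, since the subsequent Gershgorin/RIP argument only applies the bound with $t$ on the order of $\log\bigl(\eta(D+1)\bigr)$, comfortably within $t \leq 2P$ once the pilot length $P = S$ is chosen large enough for recovery.
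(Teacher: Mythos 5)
Your proof takes essentially the same route as the paper's: write $|z_i|^2 = x_i^2 + y_i^2$, treat the sum as $2P$ i.i.d.\ squared normal variables, apply the one-sided Laurent--Massart bounds, and combine the two tails. The one place you go beyond the paper is the caveat you flag, and you are right to flag it: the paper states the upper-tail bound with threshold $2\sigma^2\sqrt{2Pt}+2\sigma^2 t$ and then simply writes ``the result follows,'' silently absorbing the $2\sigma^2 t$ term into the symmetric threshold $4\sigma^2\sqrt{2Pt}$ --- which, exactly as you observe, is valid only when $t \leq 2P$. The restriction is not cosmetic: the $\chi^2_{2P}$ upper tail is subexponential, so a bound of the form $\Pr(\cdot \geq c\,\sigma^2\sqrt{Pt}) \leq 2e^{-t}$ cannot hold for arbitrarily large $t$, and the lemma as stated needs the hypothesis $t \leq 2P$ (or the $+2\sigma^2 t$ term carried along). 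Your proposed fix is also consistent with how the lemma is used downstream: Theorem \ref{thm_main} invokes it with $t = P\epsilon_d^2/8$ and $\epsilon_d = \delta_S/3 < 1$, which satisfies $t \leq 2P$ comfortably, so adding the hypothesis costs nothing.
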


\begin{proof}
    This follows from the Laurent-Massart \cite{Laurent:Adaptive:2000} bounds given in \cite{Haupt:ToeplitzCS:10}, where we note that $\sum_i |z_i|^2 = \sum_i x_i^2 + y_i^2.$ Because $x_i$ and $y_i$ are i.i.d., treat this as a sum of $2P$ i.i.d. squared normal random variables. 
    This gives the one-sided bounds
        \[ \Pr\left(\sum_i |z_i|^2 - 2P\sigma^2 \geq 2\sigma^2\sqrt{2Pt} + 2\sigma^2 t\right) \leq \exp(-t)\]
    and 
    \[ \Pr\left(2P\sigma^2 - \sum_i |z_i|^2 \geq 2\sigma^2 \sqrt{2Pt}\right) \leq \exp(-t).\]
    The result follows.

\end{proof}

\begin{lemma}
    Let $z_i$ and $w_i$, $i = 1, \dots, P$ be sequences of i.i.d. zero mean complex Gaussian random variables.
    
    Then 
    \[ \Pr\left(\left|\sum_{i=1}^P \bar{z_i}{w_i} \right| \geq t\right) \leq 4 \exp\left(-\frac{t^2}{16\sigma^2(2P\sigma^2 + t/4)}\right).\]
    \label{lem_dif_prod}
\end{lemma}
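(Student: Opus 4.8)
The plan is to reduce this complex-valued concentration bound to two applications of Lemma \ref{haupt} (Haupt's Lemma 9), one for the real part and one for the imaginary part of the sum, and then to combine them via a union bound. First I would write each complex Gaussian in terms of its real and imaginary parts, $z_i = a_i + j b_i$ and $w_i = c_i + j d_i$, where — following the variance convention of Lemma \ref{lem_same_prod} — the $a_i, b_i, c_i, d_i$ are real $\mathcal{N}(0,\sigma^2)$ variables, mutually independent and independent across $i$ (this encodes both the circular symmetry of each complex Gaussian and the independence of the $\{z_i\}$ from the $\{w_i\}$). Expanding the conjugate product gives
\[ \bar{z_i} w_i = (a_i c_i + b_i d_i) + j(a_i d_i - b_i c_i), \]
so that, writing $X = \sum_{i=1}^P (a_i c_i + b_i d_i)$ and $Y = \sum_{i=1}^P (a_i d_i - b_i c_i)$, the quantity of interest is $\sum_i \bar{z_i} w_i = X + jY$ with $|\sum_i \bar{z_i} w_i| = \sqrt{X^2 + Y^2}$.

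Next I would observe that each of $X$ and $Y$ is a sum of $2P$ products of independent pairs of i.i.d. zero-mean Gaussians, so that Lemma \ref{haupt} applies with $P$ replaced by $2P$. For $X$ this is immediate: the summands $a_1 c_1, \dots, a_P c_P, b_1 d_1, \dots, b_P d_P$ each involve a distinct pair drawn from the $4P$ mutually independent underlying variables. For $Y$ the same holds after absorbing the sign, writing $Y = \sum_i a_i d_i + \sum_i b_i(-c_i)$ and noting $-c_i \sim \mathcal{N}(0,\sigma^2)$. Applying Lemma \ref{haupt} to each with threshold $s$ then yields
\[ \Pr(|X| \geq s),\ \Pr(|Y| \geq s) \leq 2\exp\left(-\frac{s^2}{4\sigma^2(2P\sigma^2 + s/2)}\right). \]

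The final step is to pass from these two real tail bounds to the complex magnitude via the elementary implication that $\{|X+jY| \geq t\}$ forces $\{|X| \geq t/2\} \cup \{|Y| \geq t/2\}$: if instead both $|X| < t/2$ and $|Y| < t/2$, then $X^2 + Y^2 < t^2/2$ and hence $|X+jY| < t/\sqrt{2} < t$. A union bound over these two events (contributing the leading factor of $4 = 2 + 2$), together with the tail estimate at $s = t/2$, gives
\[ \Pr\left(\left|\sum_{i=1}^P \bar{z_i} w_i\right| \geq t\right) \leq 4\exp\left(-\frac{(t/2)^2}{4\sigma^2(2P\sigma^2 + (t/2)/2)}\right) = 4\exp\left(-\frac{t^2}{16\sigma^2(2P\sigma^2 + t/4)}\right), \]
which is precisely the claimed bound.

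I expect the main subtlety to lie not in any heavy computation but in (i) the independence bookkeeping that justifies applying Lemma \ref{haupt} to the regrouped $2P$-term sums $X$ and $Y$, and (ii) the choice of threshold split. In particular, the looser split at $t/2$ (rather than the tighter $t/\sqrt{2}$) is exactly what reproduces the stated constants $16\sigma^2$ and $t/4$; the sharper split would yield different constants that do not match the statement. I would therefore take care to make the variance convention and the mutual independence of $\{z_i\}$ and $\{w_i\}$ explicit, since these are implicit in the lemma's hypotheses.
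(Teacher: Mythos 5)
Your proof is correct and takes essentially the same route as the paper: decompose $\bar{z_i}w_i$ into real and imaginary parts, recognize each part as a $2P$-term sum of products of independent zero-mean Gaussians, apply Lemma \ref{haupt} at threshold $t/2$, and combine the two tail bounds for an overall factor of $4$. The only cosmetic differences are that you justify the event inclusion $\{|X+jY|\geq t\}\subseteq\{|X|\geq t/2\}\cup\{|Y|\geq t/2\}$ geometrically rather than via the triangle inequality, and that you sum the two probabilities where the paper uses $2\max$ of identically distributed terms; both yield the same constants.
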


\begin{proof}
    Let $z_i = x_i + jy_i$ and $w_i = u_i + jv_i$. Then $\bar{z_i}w_i = (x_iu_i + y_iv_i) + j(x_iv_i-y_iu_i)$, and by the triangle inequality
    \[ |\sum_i \bar{z_i}{w_i}| \leq |\sum_i (x_iu_i + y_i v_i)| + |\sum_i (x_iv_i - y_iu_i)|.\]
    To illustrate, suppose $A, B, C \geq 0$ such that $A \leq B + C$. For any $\epsilon \geq 0$, 
    \[ \Pr (A \geq \epsilon) \leq \Pr (B + C \geq \epsilon). \]
    We also see that
    \[ \Pr (B+C \geq \epsilon) \leq \Pr(B \geq \epsilon/2 \textrm{ or } C \geq \epsilon/2), \]
    so we must have 
    \[ \Pr(A \geq \epsilon) \leq 2\max \{ \Pr(B \geq \epsilon/2), \Pr(C \geq \epsilon/2). \} \]
    Proceeding as above with $x_i, y_i, u_i, v_i$ i.i.d., we obtain
    \[ \Pr\left(\left|\sum_{i=1}^P \bar{z_i}{w_i} \right| \geq t\right) \leq 2 \Pr \left( \left|\sum_i^P x_iu_i + y_i v_i\right| \geq t/2\right), \]
    and the result follows from Lemma \ref{haupt}.
\end{proof}

With these results in place, our main results are presented in Theorem 1.

\begin{theorem}
Let the $P = S - D$ nonzero pilot symbols $\{s_{i}^{(j)}\}_{i=1}^P$, for $j \in \Omega$, be i.i.d. zero mean complex Gaussian random variables with variance $\sigma_p^2 = \frac{1}{2P}$ and assume $D \geq 1$. For any $\delta_S \in (0, 1)$ and sparsity level $\tilde{S} \leq c_2 \sqrt{P/\log(4\eta(\eta+1)(P-1) + 2\eta^2)}$, the matrix $\textbf{T} = [\textbf{T}^{(j_{1})} \cdots \textbf{T}^{(j_{\eta})}]$ satisfies the $(\tilde{S}, \delta_S)$-RIP with probability at least $1-\exp(-c_1 P/ (\tilde{S}-1)^2)$, where $c_1$ and $c_2$ depend only on $\delta_S$.
\label{thm_main}
\end{theorem}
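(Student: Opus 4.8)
The plan is to establish the RIP by controlling the spectrum of the Gram matrix $\textbf{G} = \textbf{T}^T \textbf{T}$ on every $\tilde{S}$-sparse support, exactly as in the sketch preceding the statement. Recall that $\textbf{T}$ satisfies the $(\tilde{S}, \delta_S)$-RIP precisely when every $\tilde{S} \times \tilde{S}$ principal submatrix of $\textbf{G}$ has all of its eigenvalues inside $[1 - \delta_S, 1 + \delta_S]$. By Gershgorin's Circle Theorem (Lemma \ref{lem_gershgorin}), each such eigenvalue lies in a disc centered at a diagonal entry $G_{ii}$ with radius equal to the sum of off-diagonal magnitudes $\sum_{j \neq i}|G_{ij}|$ over the support. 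Hence it suffices to force simultaneously, over all entries, that every diagonal entry satisfies $|G_{ii} - 1| \leq \delta_S/2$ and that every off-diagonal entry satisfies $|G_{ij}| \leq \delta_S / (2(\tilde{S}-1))$; since any support of size $\tilde{S}$ contributes at most $\tilde{S}-1$ off-diagonal terms to a given row, the radius is then at most $\delta_S/2$ and every Gershgorin disc sits inside $[1-\delta_S, 1+\delta_S]$. This reduction is what lets us avoid a union bound over the exponentially many supports and instead union bound only over the distinct entries of $\textbf{G}$.

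First I would handle the diagonal. Because each column of $\textbf{T}^{(j)}$ is a full-length shifted copy of the nonzero pilot sequence, each diagonal entry equals $\sum_{i=1}^P |s_i^{(j)}|^2$, whose expectation is $2P\sigma_p^2 = 1$ under the prescribed variance $\sigma_p^2 = 1/(2P)$; moreover this value is constant within each user's block, leaving only $\eta$ distinct diagonal quantities. Applying Lemma \ref{lem_same_prod} with $t$ chosen so that $4\sigma_p^2\sqrt{2Pt} \leq \delta_S/2$ gives $|G_{ii}-1| \leq \delta_S/2$ for a fixed diagonal entry with failure probability at most $2\exp(-t)$.

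Next I would control the off-diagonal entries, which come in two flavors. Entries coupling two \emph{different} users are cross-correlations $\sum_l \bar{s}_l^{(j)} s_{l'}^{(i)}$ of independent sequences; their summands are genuinely independent products, so Lemma \ref{lem_dif_prod} applies directly. Entries coupling two shifted copies of the \emph{same} user's sequence are autocorrelations $\sum_l \bar{s}_l^{(j)} s_{l+m}^{(j)}$ at a nonzero lag $m$, whose summands share variables and are therefore \emph{not} independent. For these I would follow Haupt \textit{et al.}\ \cite{Haupt:ToeplitzCS:10} and partition each lag-$m$ sum into sub-sums within which no pilot symbol is repeated, apply the concentration bound of Lemma \ref{haupt} (equivalently Lemma \ref{lem_dif_prod}) to each independent piece, and recombine. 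In either case, substituting $\sigma_p^2 = 1/(2P)$ and choosing $t \asymp \delta_S/(\tilde{S}-1)$ so that the tail level equals $\delta_S/(2(\tilde{S}-1))$ yields a per-entry failure probability of the form $4\exp\!\big(-c' \delta_S^2 P/(\tilde{S}-1)^2\big)$.

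Finally I would assemble the pieces by a union bound over all distinct diagonal and off-diagonal quantities of $\textbf{G}$: the $\eta$ diagonal norms, the autocorrelation lags within each of the $\eta$ blocks, and the cross-correlation lags across each of the $\binom{\eta}{2}$ user pairs. Weighting these by the prefactors $2$ and $4$ coming from Lemmas \ref{lem_same_prod} and \ref{lem_dif_prod} — with the partitioning of the dependent autocorrelation sums contributing the $(P-1)$ factor — produces the count $4\eta(\eta+1)(P-1) + 2\eta^2$ appearing inside the logarithm. Requiring the common concentration exponent to dominate the logarithm of this count is exactly what forces $(\tilde{S}-1)^2 \lesssim P/\log\!\big(4\eta(\eta+1)(P-1)+2\eta^2\big)$, i.e.\ the stated sparsity condition with a constant $c_2$ depending only on $\delta_S$, while the residual probability collapses to $\exp\!\big(-c_1 P/(\tilde{S}-1)^2\big)$. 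I expect the main obstacle to be the autocorrelation terms: the shared-variable dependence among shifted copies of a single pilot sequence blocks any naive use of the independent-product lemma, and correctly splitting each lag into independent sub-sums — while tracking how this affects both the effective number of terms and the union-bound count — is the crux of the argument. A secondary but delicate point is the bookkeeping that turns the entry enumeration into the precise in-log expression, together with the factor-of-two inflation of the effective sample size that the real/imaginary decomposition introduces in Lemmas \ref{lem_same_prod} and \ref{lem_dif_prod}.
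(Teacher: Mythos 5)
Your proposal follows essentially the same route as the paper's proof: Gershgorin's theorem reduces RIP to per-entry control of the Gram matrix, Lemma \ref{lem_same_prod} (Laurent--Massart) handles the diagonal, Lemma \ref{lem_dif_prod} handles cross-user entries directly while the dependent same-user autocorrelations are split into independent sub-sums \`a la Haupt \emph{et al.}, and a union bound over the distinct entries yields the stated count and sparsity condition. The only differences are immaterial bookkeeping choices --- you split the tolerance as $\delta_S/2 + \delta_S/2$ where the paper uses $\epsilon_d = \delta_S/3$, $\epsilon_o = 2\delta_S/3$ (affecting only the unspecified constants $c_1, c_2$), and the $(P-1)$ factor actually counts the distinct lags per diagonal block rather than arising from the partitioning itself, whose factor of $2$ is absorbed into the prefactor.
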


\begin{proof}
We first would like to bound the probability that diagonal elements of the Gram matrix are far from 1. Denote $g_{ii}^{(j)} = \langle \textbf{s}^{(j)}, \textbf{s}^{(j)} \rangle$ where $\textbf{s}^{(j)}$ is the full pilot sequence for user $j$. Each of the $D+1$ diagonal elements $g_{ii}^{(j)}$ of a given diagonal block $(\textbf{T}^{(j)})^T \textbf{T}^{(j)}$ are identical.
Apply Lemma 3 with $z_i = s_{i}^{(j)}$ to get 
\[ \Pr(|g_{ii}^{(j)}-1| \geq \epsilon_d) \leq 2 \exp\left(\frac{-P\epsilon_d^2}{8}\right). \]
Taking the union over the $\eta$ diagonal blocks gives
\[ \Pr(\cup_{i=1}^{\eta}|g_{ii}^{(j)}-1| \geq \epsilon_d) \leq 2\eta\exp\left(\frac{-P\epsilon_d^2}{8}\right). \]

Next, we would like to bound the probability that the off-diagonal elements are far from 0. The off-diagonal elements of the diagonal blocks are inner products of shifted versions of the same pilot sequence. For example, 
\[g_{i+1, i}^{(j)} = \sum_{n=1}^{P-1} \overline{s_{n+1}^{(j)}} s_n^{(j)} = \overline{s_2^{(j)}}s_1^{(j)} + \cdots + \overline{s_{P}^{(j)}}s_{P-1}^{(j)}.\]
As in \cite{Haupt:ToeplitzCS:10}, we split this sum to separate dependent terms:
\[ \sum_{n=1}^{t_1} \overline{s_{\pi_1(n+1)}^{(j)}}s_{\pi_1(n)}^{(j)} + \sum_{n=1}^{t_2} \overline{s_{\pi_2(n+1)}^{(j)}}s_{\pi_2(n)}^{(j)},\]
where $\pi_1$ and $\pi_2$ are permutations to reorder the terms of the sum and $t_1$, $t_2$ are the corresponding number of elements (no greater than $\frac{P}{2}$.
Labeling these components $g^{(j)_1}$ and $g^{(j)_2}$, we see that 
\begin{align*}
     \Pr\left(|g_{i\ell}^{(j)} | \geq \frac{\epsilon_o}{\tilde{S}-1} \right) \leq  2 \max & \left\{ \Pr\left(|g_{i\ell}^{(j)_1} | \geq \frac{\epsilon_o}{2(\tilde{S}-1)} \right), \right. \\
    & \quad \left. \Pr\left(|g_{i\ell}^{(j)_2}| \geq \frac{\epsilon_o}{2(\tilde{S}-1)} \right)  \right\}
\end{align*}
Applying Lemma \ref{lem_dif_prod} to each of the terms on the right hand side, we have upper bounds with exponential arguments of the form
\[ -\frac{P^2\epsilon_o^2 }{4(\tilde{S}-1)^2(4t + \frac{\epsilon_o}{(\tilde{S}-1)}P)}. \]
Relying on the fact that $t \leq P$ and $\epsilon_o \leq \tilde{S}$, 
we reduce this to $- P \epsilon_o^2/20(\tilde{S}-1)^2$. 
Finally we take the union bound first over the distinct elements of a block then over the diagonal blocks producing a factor of $\eta (P-1)$: 
\begin{align*}
    &\Pr(\cup_{j\in \eta} \cup_{\ell=1}^{P-1} \{g_{i\ell}^{(j)} \geq  \frac{\epsilon_o}{\tilde{S}-1} \} )  \\
    &\quad \quad\quad\quad \quad \quad  \leq 8\eta (P-1) \exp\left(\frac{-P\epsilon_o^2}{20(\tilde{S}-1)^2}\right)
\end{align*} 

For the off-diagonal blocks, there are no issues of dependence, so we may apply the Lemma \ref{lem_dif_prod} directly. By carefully counting the number of nonzero and non-identical entries in the off-diagonal blocks, the total number of distinct terms is found to be $(\eta^2 - \eta)(2P-1)/2$. Thus, taking the union bound yields
\begin{align*} 
&\Pr(\cup_{O = 1}^{\eta^2-\eta} \cup_{i, \ell = 1}^{D^2 + D} \{g_{i\ell}^{(j)} \geq \frac{\epsilon_o}{\tilde{S}-1} \} ) \\
& \quad \quad \quad \leq 2(\eta^2-\eta)(2P-1)\exp\left(-\frac{P \epsilon_o^2}{10(\tilde{S}-1)^2}\right).
\end{align*}

As a consequence of Lemma \ref{lem_gershgorin}, for $\textbf{T}$ to satisfy the RIP we require that $\epsilon_d + \epsilon_o = \delta_s$. Thus, we let $\epsilon_d = \delta_S/3$ and $\epsilon_o = 2\delta_S/3$. Combining the above bounds, we see that
\begin{align*}
     \Pr(\textbf{T} & \text{ does not satisfy } RIP) \leq \\
    & \left( 4\eta(\eta+1)(P-1) + 2\eta^2 \right) \exp\left(\frac{-P \delta_S^2}{45 (\tilde{S}-1)^2}\right).
\end{align*}
The result follows with
$$c_1 < \delta_S^2 / 45, \quad c_2 = \sqrt{(\delta_S^2 - 45c_1)/45}.$$

\end{proof}

Theorem \ref{thm_main} can be used to determine the required number of nonzero pilot symbols for guaranteed delay recovery. By choosing a desired probability, a lower bound on $P$ can be obtained. This bound can be lowered by decreasing the variance of the iid pilot symbols, which was assumed to be $1/2P$ throughout Theorem \ref{thm_main}. Note however, that Theorem \ref{thm_main} does not consider the effect of noise. While decreased pilot power will yield improved recovery in low-noise conditions, this may have adverse affects in high-noise environments. This introduces an application-specific design trade-off that must be considered in a practical implementation. We plan to investigate the implications of noise in our future work.

\section{Simulation Results}
\label{sim_section}

To reduce the complexity of MUD incurred by MPA, we employ the parallel Monte Carlo-Markov Chain (MCMC) SCMA decoder, first introduced by Chen \textit{et al.} \cite{chen}. While the complexity of nearly all SCMA decoders increases exponentially with the number of users sharing a resource element \cite{scma_survey}, the complexity of the parallel MCMC decoder scales linearly with this parameter, unlocking the possibility of a massive implementation. The parallel MCMC decoding algorithm from \cite{chen} is given here in Algorithm \ref{mcmc_alg} for completeness.

\begin{algorithm}
    \caption{Parallel MCMC SCMA Decoder}
    \begin{algorithmic}[1]
    \label{mcmc_alg}
    \STATE \textit{Initialization:} Randomly select $\textbf{x}^{j,(0,n_2)}$ from $\mathcal{C}_j$ $(j=1,\ldots,J)$
    \STATE Set $\gamma_{b_i^k = \pm1}^{\textrm{max}} = - \infty$
    \FOR {$n_1=1$ to $N_s$}
        \FOR {$n_2=1$ to $N_2$}
            \FOR {$j = 1$ to $J$}
                \STATE Draw samples $\textbf{x}^{j,(n_1,n_2)}$
                \STATE Demap $\textbf{b}^{j,(n_1,n_2)}$ by $\textbf{b}^j = (f^j)^{-1}(x_1^j, x_2^j, \ldots, x_K^j)$
                \STATE Let $b_i^{j,(n_1,n_2)} = \pm 1$
                \STATE $\textbf{x}_{b_i^k = \pm 1}^{j,(n_1,n_2)} = f^j(b_1^j, b_2^j, \ldots, b_{N_B}^k \vert b_i^{j,(n_1,n_2)} = \pm 1)$
            \ENDFOR
            \STATE $\gamma_{b_i^j = \pm 1}^{(n_1,n_2)} = -\frac{1}{\sigma_z^2} \left\Vert \textbf{y} - \sum_{m\neq j}^J \bar{\textbf{x}}^{m,(n_1,n_2)} - \textbf{x}_{b_i = \pm 1}^{k,(n_1,n_2)} \right\Vert^2$
            \STATE let $\gamma_{b_i^j = \pm 1}^{\textrm{max}} = \gamma_{b_i^j = \pm 1}^{(n_1,n_2)}$ if $\gamma_{b_i^j = \pm 1}^{\textrm{max}} < \gamma_{b_i^j = \pm 1}^{(n_1,n_2)}$ $i = 1,\ldots,N_B$
        \ENDFOR
    \ENDFOR
    \RETURN $\lambda(b_i^j) = \gamma_{b_i^j = + 1}^{\textrm{max}} - \gamma_{b_i^j = - 1}^{\textrm{max}}$
    \end{algorithmic}
\end{algorithm}

To test the performance of the proposed scheme, we simulate an SCMA system operating under the frame structure specified by the current 5G standards. That is, we consider the case in which 12 OFDM tones, with a subcarrier spacing of 30kHz, are treated as orthogonal resource elements, and users transmit packets the length of a single frame, or 10ms period. Thus, decoding takes place over a single resource block. In practice, this could represent a system in which many devices periodically transmit very small quantities of data, e.g. a sensor network utilizing mMTC. Each user/device utilizes two RE's, and three users share each RE, resulting in a total of 18 users communicating over a single resource block. We employ the low-complexity SCMA codebook design scheme outlined in \cite{yang_codebook_design}, which was designed for large-scale SCMA. 1/4 polar encoding is applied to the bits of each user before transmission. Under this setup, each user transmits $N = 224$ channel-coded QAM symbols over each frame, or 224/4 = 56 data symbols. The pilot length is chosen to be $S = 56$, and the maximum delay is assumed as $D = 42$, or 1.5ms. Pilot sequences are generated as sequences of iid complex normal random variables with power $\mathcal{E}^{(j)}$.

First, we demonstrate the performance of the proposed delay estimation technique for the AWGN channel. We simulate asynchronous transmissions from each user according to the setup described above, perform delay recovery over the pilot sequences, and then compute the mean absolute error (MAE) of the recovered delays. Results for various pilot sequence powers $\mathcal{E}^{(j)}$ obtained using the proposed LASSO method are given in Figure \ref{delay_est_results_lasso}. Results using the well-known least squares estimation technique are given in Figure \ref{delay_est_results_LS} for comparison. As expected, the LASSO method greatly outperforms the least squares method, due to the underdetermined nature of the problem.

\begin{figure}[htbp]
    \centerline{\includegraphics[scale = 0.55]{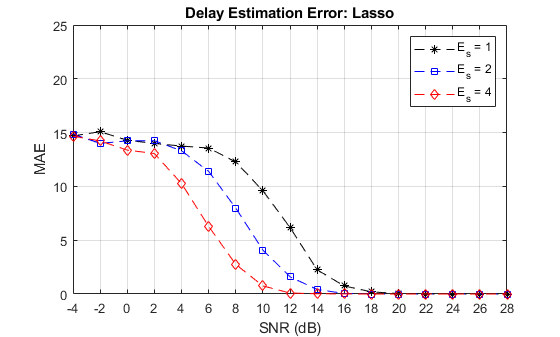}}
    \caption{Mean absolute error of LASSO delay estimation}
    \label{delay_est_results_lasso}
\end{figure}

\begin{figure}[htbp]
    \centerline{\includegraphics[scale = 0.55]{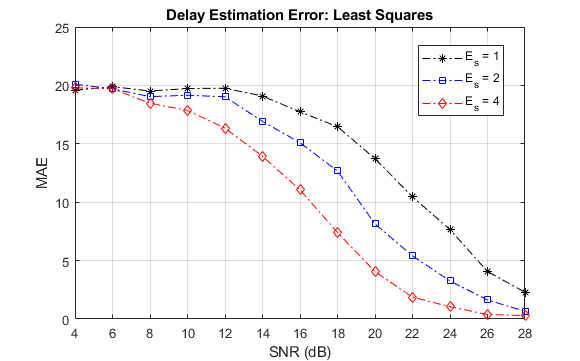}}
    \caption{Mean absolute error of least squares delay estimation}
    \label{delay_est_results_LS}
\end{figure}

Next we simulate the entire asynchronous end-to-end system for the AWGN channel. The Parallel MCMC decoder is implemented using 15 sampling iterations that are carried out over four parallel sampling chains, with mixing parameter $\mu = 10$. For these tests, the power of the pilot sequences for all users is held constant at $\mathcal{E}^{(j)} = 1$. The bit error rate (BER) was recorded for several SNR values, and the results are presented in Figure \ref{awgn_ber_fig}. Here, the asynchronous system with the proposed delay estimation is compared to a synchronous system employing only the Parallel MCMC decoder with no delay estimation.

\begin{figure}[htbp]
    \centerline{\includegraphics[scale = 0.60]{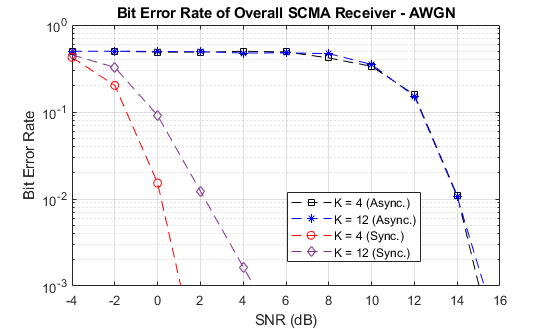}}
    \caption{Overall BER simulation results - AWGN channel}
    \label{awgn_ber_fig}
\end{figure}

\begin{figure}[htbp]
    \centerline{\includegraphics[scale = 0.50]{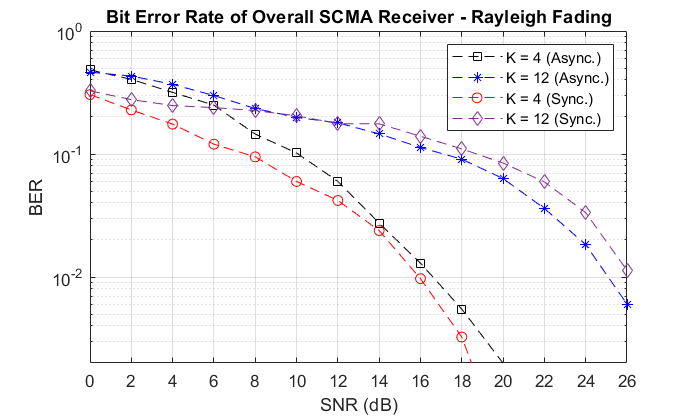}}
    \caption{Overall BER simulation results - Rayleigh fading channel}
    \label{rayleigh_ber_fig}
\end{figure}

From Figure \ref{awgn_ber_fig}, we see the impact of unknown delays on the asynchronous scheme. Also, it is interesting to note that both asynchronous systems (4 and 12 RE) appear to demonstrate identical performance, though we would expect the 4 RE system to outperform the 12 RE system, as in the synchronous trials. To make sense of this, turn again to Figure \ref{delay_est_results_lasso}, which shows that at $\mathcal{E}^{(j)} = 1$, the delay estimation begins to improve around 6dB; this corresponds exactly to where the performance of the asynchronous systems in Figure \ref{awgn_ber_fig} begin to improve as well. From this observation, we can see that the performance of the overall asynchronous system is limited by the delay estimation step, which we can improve by increasing the power of the pilot sequence, as is shown in Figure \ref{delay_est_results_lasso}.

Next, we repeat these experiments for the Rayleigh fading channel. In these experiments, the power of the pilot sequences is increased to $\mathcal{E}^{(j)} = 5$ to compensate for the effects of fading. From Figure \ref{rayleigh_ber_fig}, we see that the both asynchronous and synchronous systems are greatly impacted by fading, particularly as the size of the system grows. For the $K = 4$ scenario, we see that performance of the asynchronous receiver begins to converge to that of the synchronous receiver around 14dB, though it never fully converges. This is due to the fact that there is always a positive probability for a deep fade, which will severely impact delay recovery despite a high SNR. Furthermore, we see that the asynchronous receiver actually \textit{outperforms} the synchronous receiver for high SNR in the $K = 12$ scenario, beginning around 14dB. This is likely due to the fact that, in the asynchronous scheme, the first received bits experience no interference from other users (see Figure \ref{async_scma_fig}). Therefore, if delay information is known, it actually becomes easier to decode these bits, leading to improved performance.

\section{Conclusion}
\label{con_sec}

In this paper, we present a novel delay estimation scheme for an asynchronous uplink SCMA system. First, the asynchronous system model is defined, in which each user's transmitted signal experiences some unknown delay at the receiver. Then, we formulate the delay estimation problem as a sparse signal recovery problem, which can then be solved with standard optimization techniques. Furthermore, we provide bounds on delay recovery under the assumption of iid Gaussian pilot symbols, which can be used to inform system design. Finally, the performance of the proposed scheme is illustrated through various simulations. It is shown that the proposed scheme is able to accurately estimate delays and a receiver implementing the proposed scheme can achieve high reliability in the presence of AWGN and fading channels.

The results in this paper pave the way for a more robust receiver at the uplink of an SCMA system. There are many opportunities to expand upon this work. One avenue is to consider the joint design of the delay estimation scheme and decoder. The delay estimation scheme proposed in this work is decoder-agnostic, and thus can be used with any existing synchronous SCMA decoder. A joint design may yield improved performance. Taking this a step further, the SCMA codebook design, delay-estimation scheme, and decoder can all be designed together in a joint manner. While this represents a difficult problem, recent advances in deep learning and model-based deep learning provide a promising approach. Another opportunity for future work lies in the delay estimation scheme itself. While the traditional LASSO problem is used to recovery the sparse signal in this paper, other approaches (both convex and non-convex) using different penalty functions can sometimes yield better performance. It would be worthwhile to explore these other approaches to solving the sparse-recovery delay estmation problem, as well as the general channel estimation problem in which the complex channel coefficients are recovered as well.


\printbibliography

\begin{IEEEbiography}[{\includegraphics[width=1in,height=1.25in,clip,keepaspectratio]{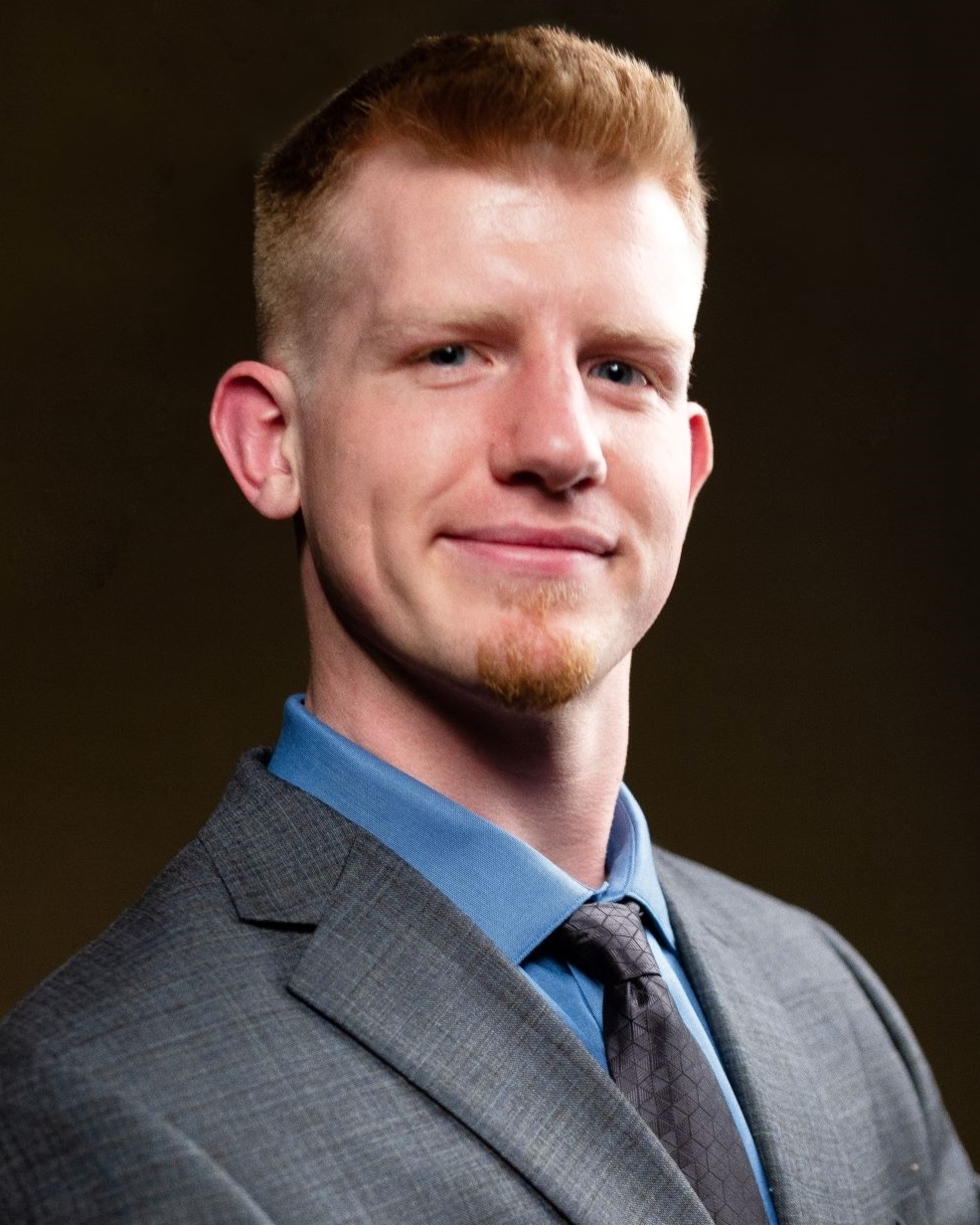}}]%
{Dylan Wheeler}
(Graduate Student Member, IEEE) received the A.S. degree from Highland Community College, Highland, KS, USA in 2016, the B.S. degree in Engineering from Ottawa University, Ottawa, KS, USA in 2018, and the M.S. degree in Electrical and Computer Engineering from Kansas State University, Manhattan, KS, USA in 2021. He is currently a Ph.D. student and a member of the Cyber-Physical Systems and Wireless Innovations Research Group at Kansas State University, Manhattan, KS, USA. His research interests include semantic communications, machine learning and artificial intelligence, and internet-of-things technologies for beyond-5G wireless networks.
\end{IEEEbiography}

\begin{IEEEbiography}[{\includegraphics[width=1in,height=1.25in,clip,keepaspectratio]{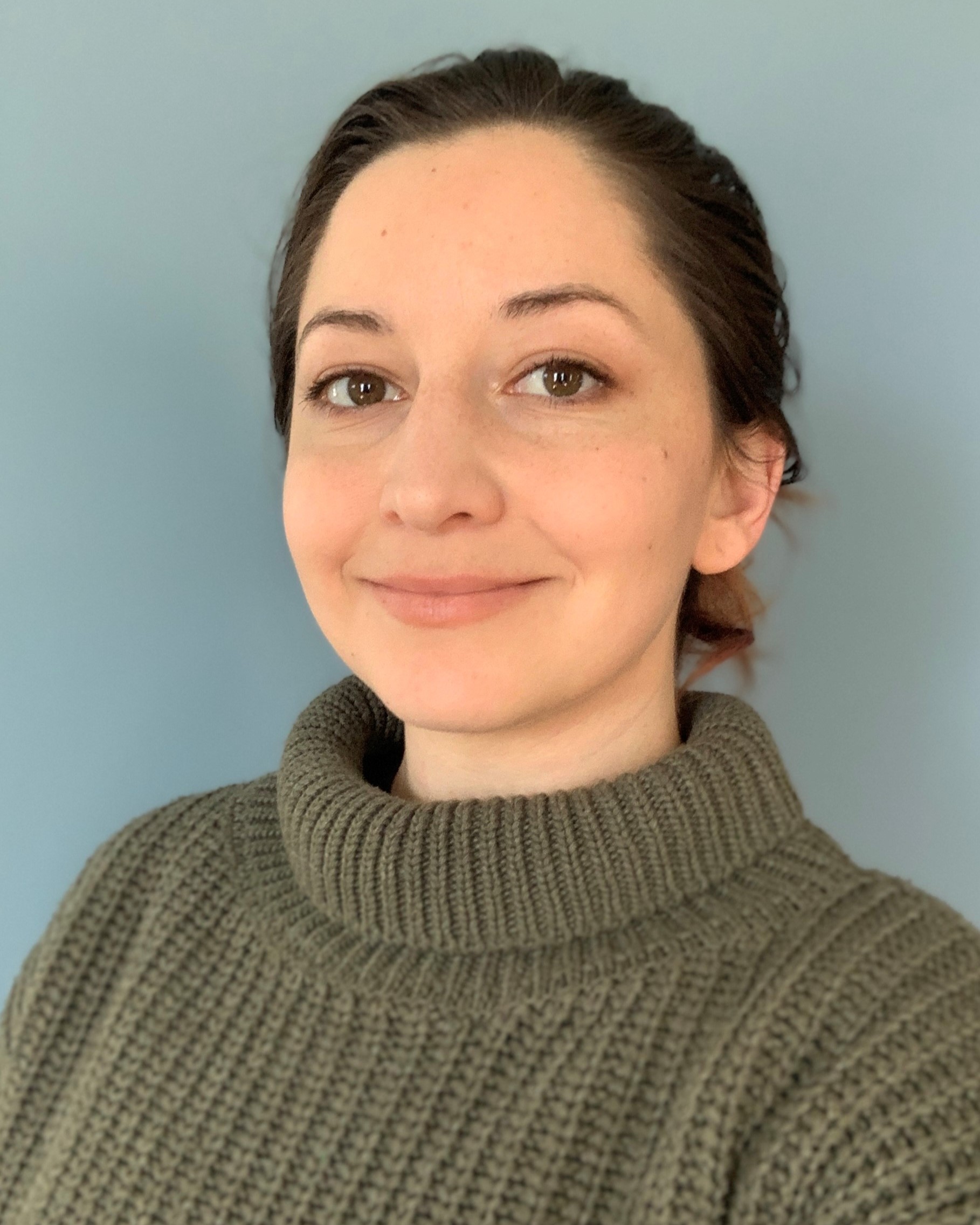}}]%
{Erin E. Tripp}
received the B.S. degree in mathematics from the University of California, Santa Barbara, Santa Barbara, CA, USA, in 2013, the M.S. and Ph.D. degrees in mathematics from Syracuse University, Syracuse, NY, USA, in 2017 and 2019, respectively.
She is currently a Research Mathematician with the Air Force Research Laboratory Information Di-rectorate, Rome, NY, USA, working in optimization theory with applications to signal and image processing and machine learning.
\end{IEEEbiography}

\vskip 0pt plus -1fil

\begin{IEEEbiography}[{\includegraphics[width=1in,height=1.25in,clip,keepaspectratio]{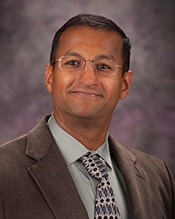}}]%
{Balasubramaniam Natarajan}
(Senior Member, IEEE) received the B.E. degree (Hons.) in electrical and electronics engineering from Birla Institute of Technology and Science, Pilani, India, Ph.D. degree in electrical engineering from Colorado State University, Fort Collins, CO, USA, Ph.D. degree in Statistics from Kansas State University, Manhattan, KS, USA, in 1997, 2002, and 2018, respectively. He is currently a Clair N. Palmer and Sara M. Palmer Endowed Professor and the Director of the Cyber-Physical Systems and Wireless Innovations Research Group. His research interests include statistical signal processing, stochastic modeling, optimization, and control theories. He has worked on and published extensively on modeling, analysis and networked estimation and control of smart distribution grids and cyber physical systems in general. He has published over 200 refereed journal and conference articles and has served on the editorial board of multiple IEEE journals including IEEE Transactions on Wireless Communications.
\end{IEEEbiography}

\end{document}